\newtheorem{theorem}{Theorem}[section]
\newtheorem{lemma}[theorem]{Lemma}
\title{Multiple source, single sink maximum flow in a planar graph}
\author{Glencora Borradaile \\ Oregon State University \and Christian
  Wulff-Nilsen \\ University of Copenhagen}
\begin{document}

\maketitle
\abstract{We give an $O(n^{1.5}\log n)$ time algorithm for finding the
  maximum flow in a directed planar graph with multiple sources and a
  single sink.  The techniques generalize to a subquadratic
  time algorithm for bounded genus graphs.
}

\section{Introduction}
\label{sec:introduction}

In general graphs, multiple source flow problems are reduced to
single-source problems but connecting a super source to the sources
with infinite capacity arcs.  In planar graphs this reduction destroys
the planarity.  Using the maximum flow algorithms for general graphs,
a multiple-source, single (or multiple) sink max flow in directed
planar graphs can be solved in $O(n^2 \log n)$ time using Goldberg and
Tarjan's preflow push algorithm~\cite{GT88} or $O(n^{1.5}\log n \log
U)$ where $U$ is the maximum edge capacity using Goldberg and Rao's
binary blocking flow algorithm~\cite{GR98}.  In this paper we give an
$O(n^{1.5} \log n)$ algorithm for the problem in planar graphs by
combining preflow push and augmenting path algorithms.

In planar graphs, multiple source and sink flow problems have been
studied by Miller and Naor, giving subquadratic-time algorithms for
the case when the sources and sinks are on a common face and for the
feasibility problem (the amount of flow out of every source and into
every sink is known)~\cite{MN95}.  The maximum single source, single
sink flow in a directed planar graph can be found in $O(n \log n)$
time~\cite{BK09}.

\section{Definitions}
\label{sec:definitions}

We are given a directed planar graph $G$ with arc capacities a set of
source vertices $S$ and sink vertices $T$.  A flow is an assignment of
values to arcs not exceeding the capacity such that the flow entering
a non-source, non-sink vertex is equal to the flow leaving the same
vertex.  A flow is maximum if it maximizes the amount of flow leaving
$S$ and (equivalently) there is no residual path from any vertex in
$S$ to any vertex in $T$.  An arc is residual if the flow is less than
the capacity.  The reverse of an arc is residual if there is flow on
the arc.  For more formal definitions, see~\cite{BK09}.

A {\em preflow} is a flow that allows excess inflow at vertices that
are not sink vertices.  A maximum preflow is a preflow that maximizes
the flow into the sinks.  We will use the following lemma, which holds
for general graphs. The forward direction follows from the definition
of maximum preflows. The reverse direction follows from the Max Flow,
Min Cut Theorem: if there are no residual paths from sources or
vertices with excess inflow to sinks, then there is a saturated cut
separating the sources from the sinks and the preflow cannot be
increased (see~\cite{GT86}).

\begin{lemma}\label{lem:preflow-equiv}
  A preflow is maximum if and only if there is no residual path from a
  source to a sink or from a vertex with excess inflow to a sink.
\end{lemma}

We also assume for this draft that the reader is familiar with
recursive subdivisions of planar graphs. A {\em piece} is a subgraph
resulting from this decomposition. See~\cite{FR06} for details.

\section{Algorithm for a piece not containing sinks}
\label{sec:algorithm}

Our multiple-source, single sink algorithm
(Section~\ref{sec:overall-algorithm}) is a recursive algorithm using a
recursive decomposition based on cycle separators.  In this section,
we present a solution to the non-trivial subproblem in which we need
to find a maximum flow from a set of sources in a piece (of the
recursive decomposition) to sinks outside the piece.  More formally:
\begin{description} 
\item[Input:] a piece $P$ with a constant number of holes, a set $S_P$
  of sources in $P$, and a set $T_P$ of sinks none of which are
  internal vertices of $P$.
\item[Output:] The max flow from $S_P$ to $T_P$.
\end{description}

The algorithm proceeds in phases, mimicking a preflow algorithm in two
{\em pushes}: one to the boundary of the piece (Phase 1) and one from
the boundary to the sinks (Phase 2).  In a final phase, the (maximum)
preflow is converted to a maximum flow.

\begin{description}
\item[Phase 1:] Find the max flow in $P$ from $S_P\setminus\partial P$
  to $\partial P$.

\item[Phase 2:] For each vertex $p$ in $\partial P$, if $p$ is a
  source, compute a max $pT_P$-flow. Otherwise, push as much flow from
  $p$ to $T_P$ as possible while not exceeding the inflow to $p$.

\item[Phase 3:] Convert the above maximum preflow into a maximum flow.
\end{description}

\subsection{Correctness}
\label{sec:correctness}

We need the following lemma, which holds for general graphs.
\begin{lemma}\label{lem:cuts-stay-saturated}
  In a graph, let $A,B$ be a partition of the vertices such that all
  the arcs from $A$ to $B$ are saturated and the sinks $t$ are
  vertices of $B$.  Augmenting a preflow cannot make this cut
  residual.
\end{lemma}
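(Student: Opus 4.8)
The plan is to reduce the statement to a single augmentation step and then induct. An augmentation pushes $\delta>0$ units of flow along a simple residual path $\pi$ from a source or a vertex with excess inflow to a sink $t$, and since the sinks lie in $B$ we have $t\in B$. It suffices to show that such a push cannot decrease the flow on any arc directed from $A$ to $B$: a saturated $A$-to-$B$ arc has no forward residual copy, so $\pi$ can only affect its flow by traversing its reverse (an arc from $B$ to $A$), which would decrease the flow; ruling this out keeps every $A$-to-$B$ arc at capacity. After one such augmentation the hypotheses of the lemma still hold — the sinks are still in $B$ and, by what we just proved, the $A$-to-$B$ arcs are still saturated — so the full statement follows by induction on the number of augmentations.

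For the main argument I would first record that the net flow across the cut equals $\sum_{a\colon A\to B}\mathrm{cap}(a)-\sum_{a\colon B\to A}\mathrm{flow}(a)$, so it is at most $\sum_{a\colon A\to B}\mathrm{cap}(a)$, with equality precisely when no arc from $B$ to $A$ carries flow. Where the lemma is applied the saturated cut is the min cut certifying a maximum flow (for instance the flow produced by Phase~1), and such a cut carries no flow from $B$ to $A$: the reverse of a flow-carrying $B$-to-$A$ arc is a residual arc from $A$ into $B$, which would make a vertex of $B$ reachable from the source side. For such a \emph{clean} cut the residual graph has no arc at all crossing from $A$ to $B$ — a forward cut arc is saturated, and the reverse of a backward cut arc needs positive flow, of which there is none — so a residual path from a source or excess vertex to a sink in $B$ either starts in $B$, in which case it cannot leave $B$ (it could never return to the sink), or starts in $A$, in which case it cannot exist at all; in either case $\pi$ uses no cut arc and the cut is untouched.

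The step I expect to be the main obstacle is the bookkeeping in the general case, when flow may cross from $B$ to $A$. A single simple augmenting path can then cross the cut several times: each $A$-to-$B$ step of $\pi$ necessarily reverses a $B$-to-$A$ arc, which is harmless, while each $B$-to-$A$ step is either a forward residual $B$-to-$A$ arc (harmless) or the reverse of a saturated $A$-to-$B$ arc (the dangerous case). To show the dangerous case cannot occur one must combine the net-flow identity above — the push changes the net flow across the cut by exactly $+\delta$ when $\pi$ starts in $A$ and by $0$ when it starts in $B$ — with the capacity upper bound and the nonnegativity of the $B$-to-$A$ flow, so as to conclude that no $A$-to-$B$ arc can lose flow. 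Making this crossing-count argument watertight, and pinning down exactly which preflows and cuts the lemma must be applied to, is where the real work lies; the \emph{clean}-cut case above is the one needed for the algorithm.
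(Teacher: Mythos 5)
Your \emph{clean-cut} argument is correct and is in essence the paper's own proof: the paper argues that a residual path ending at $t\in B$ which touches $A$ would have to use a residual arc $(u,v)$ with $u\in A$ and $v\in B$, and that no such arc exists because the forward cut arcs are saturated --- which, exactly as you observe, is only true when in addition no arc from $B$ to $A$ carries flow, since otherwise its reverse is a residual arc from $A$ into $B$. Your instinct that the ``dirty'' case is the real issue is therefore right, but your plan to settle it by net-flow accounting is a dead end: the lemma as literally stated is \emph{false} once a $B$-to-$A$ arc carries flow. Take $A=\{u\}$ and $B=\{s,v,v',t\}$ with arcs $(s,v')$, $(v',u)$, $(u,v)$, $(v,t)$, each of capacity $1$ and carrying one unit of flow, together with empty arcs $(s,v)$ and $(v',t)$ of capacity $1$. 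The only $A$-to-$B$ arc, namely $(u,v)$, is saturated, yet $s\to v\to u\to v'\to t$ (using the reverses of $(u,v)$ and $(v',u)$) is a valid residual source-to-sink augmentation that leaves $(u,v)$ with flow $0$. This path starts and ends in $B$, so the net flow across the cut changes by $0$ --- your proposed identity holds --- while a forward cut arc is nonetheless desaturated, in tandem with a backward cut arc being emptied. So no amount of crossing-count bookkeeping can close the general case.

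The resolution is not more work but a stronger hypothesis: the cut must be non-residual in the usual sense, i.e.\ forward arcs saturated \emph{and} backward arcs carrying no flow. That is exactly what holds for every cut to which the paper applies the lemma (each is a min cut certifying a max flow, or the reachable-set cut witnessing the absence of residual paths), and it is what your ``clean'' case assumes. Under that hypothesis your first argument --- no residual arc crosses from $A$ to $B$ at all, so a residual path to $t\in B$ can never profitably enter $A$, and by induction on augmentations the property persists --- is a complete and correct proof, and it is the argument the paper intends, stated more carefully than the paper states it.
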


\begin{proof}
  By assumption, there cannot be an augmenting path from a vertex in
  $A$ to a sink $t$, so consider an augmenting path $P$ from a vertex
  $b\in B$ to $t$. Assume for the sake of contradiction that $P$ makes
  the cut residual. Then it would have to intersect $A$. Since $b,t\in
  B$, there is an edge $(u,v)$ on $P$ such that $u\in A$ and $v\in B$.
  But by assumption, this edge is saturated, which is a contradiction.
  This shows the lemma.
\end{proof}

We show that after Phase 2, a maximum preflow has been computed. Let
$S_P' = S_P\setminus\partial P$ and $t\in T_P$.

After Phase 1, there are no residual $s$-to-$t$ paths for any $s \in
S_P'$: such a path would have, as a prefix, a residual $s$-to-$p$ path
for a vertex $p \in \partial P$; this would contradict the maximality
of the $S_P'$-to-$\partial P$ flow.  At the end of Phase 1, there is a
saturated $S_P'-\partial P$ cut $C_{S_P'}$ (separating $S_P'$ from
$\partial P$). By Lemma~\ref{lem:cuts-stay-saturated}, the
augmentations in Phase 2 cannot introduce a residual $s$-to-$t$ path
for any $s \in S_P'$.

It remains to show that at the end of Phase 2 there are no residual
paths to $t$ from a boundary vertex $p$ which is either a source or
which has excess inflow; by Lemma~\ref{lem:preflow-equiv}, the flow at
the end of Phase 2 is a maximum $S_PT_P$-preflow.  If $p$ is a source,
we saturate all edges in a $pt$-cut when finding a max $pT_P$-flow in
Phase 2. By Lemma~\ref{lem:cuts-stay-saturated}, this cut will stay
saturated.  Now, assume that $p$ has excess inflow $f_p$ at the start
of Phase 2.  There are two cases. If we are able to route $f_p$ units
of flow to $t$, then $p$ no longer has excess inflow.  If we are
unable to route $f_p$ units of flow to $t$, there is a saturated $pt$
cut $C_p$ that is a witness to there being no residual $p$-to-$t$
paths. Lemma ~\ref{lem:cuts-stay-saturated} implies that augmenting
flows from other boundary vertices cannot make this cut residual.

\subsection{Running time}
\label{sec:running-time}

We analyze the running time of Phases 2 and 3. Phase 1 will be
computed recursively; the analysis of the recursive algorithm is in
Section~\ref{sec:overall-algorithm}.


Phase 2 can be solved in $O(|\partial P|n\log n) = O(\sqrt{|P|}n\log
n)$ time by computing a maximum single-source, single-sink flow (in
the entire graph) for each $p \in \partial P$ and $t \in T_P$.  If $p$
is a source then compute the max $pt$-flow algorithm for each $t\in
T_P$. If $p$ is a non-source vertex with excess $f$ from Phase 1,
augment the graph with a source $s$ connected to $p$ by an arc with
capacity $f$ and compute the maximum $st$ flow for each $t \in T_P$.
Each max flow computation takes $O(n\log n)$ time in planar
graphs. Hence, Phase 2 runs in $O(\sqrt{|P|}n\log n)$ time.

Maximum preflows can be converted into maximum flows in sparse graphs
in $O(n \log n)$ time (see Goldberg and Tarjan~\cite{GT86,GT88}; the
conversion is only made explicit in the earlier conference version):
first eliminate cycles of flow ($O(n \log n)$ time) and then eliminate
excess inflow from vertices by processing them in reverse topological
order in the flow's support graph ($O(n)$ time).  In fact, our maximum
preflow is already acyclic, since the flows computed in Phase 2 are
acyclic; converting to a max flow will, in fact, only take linear
time.

The total time required for Phases 2 and 3 is therefore
$O(\sqrt{|P|}n\log n)$.

\section{Multiple-source, single-sink algorithm}
\label{sec:overall-algorithm}

In this section, the overall max flow algorithm is described.
Since we apply recursion, we need to consider a slightly more
general problem than multiple source, single sink max flow in
order to ensure that subproblems are of the same form:
we need to find a max flow from a set of sources to at most a
constant $t$ number of sinks.

Let us regard the entire graph as a piece $P$ with no boundary vertices.
We define each sink to be a boundary vertex of $P$. Each such sink now
defines a degenerate hole of $P$ and $P$ contains no sinks in its interior.
By assumption, the number of sinks is bounded by $t$ so the number of holes
of $P$ is at most $t$.

Next, we obtain a subdivision of $P$ into a constant
number $p$ of subpieces; each subpiece has at most $c_pn$ vertices and
$\sqrt{c_pn}$ boundary vertices (constant $c_p < 1$). This is done with
the recursive $r$-division algorithm of Frederickson (see Lemmas 1 and 2
in~\cite{Frederickson87}) but with Miller's cycle separator
theorem~\cite{Miller86} instead of the separator theorem of Lipton and
Tarjan~\cite{LT79}. Through the recursion, we
ensure that the number of holes in each piece is bounded by $t - 1$. If we
pick $t$ to be a sufficiently large constant, we can ensure this bound
with the approach introduced by Fakcharoenphol and
Rao~\cite{FR06}.

We now run the algorithm of the previous section on each subpiece $P'$.
To solve Phase 1 for $P'$, we define a new planar graph $P''$ from $P'$
by introducing a super sink for each hole and adding an edge of infinite
capacity from each boundary vertex of that hole to the super sink. The
same is done for the external face of $P'$ if it contains boundary
vertices. Now, solving Phase 1 for $P'$ corresponds to finding a max flow
from the sources in $P''$ to its super sinks. Since $P'$ has at most
$t - 1$ holes, $P''$ has at most $t$ super sinks and so we can recurse
on $P''$ to solve this problem (note that only the super sinks of $P''$
and not the sinks of $P$ belonging to $P''$ are regarded as sinks in
the recursive call).

\subsection{Running Time}
\label{sec:running-time-1}

As described in Section~\ref{sec:running-time}, Phases 2 and 3 run in
$O(n^{3/2}\log n)$ time.

We define a recurrence relationship to bound the time for Phase 1 and
the entire algorithm. For $n$ larger than some constant, the size of
each piece in the division (including the super sinks added to the
piece) is at most $c_p'n$, $c_p < c_p' < 1$.

We repeat the recursion until each piece has size at most some
constant $r$. Let $N$ be the total size of all pieces at the
leaves of the recursion tree. Using ideas of
Frederickson~\cite{Frederickson87} (see Lemmas 1 and 2 and their
proofs in that paper for details), $N = kn$ for some constant $k$.

In the analysis, we assume that the total size of pieces at any recursion
level is $N$ by regarding the vertices in pieces at the leaves of the
recursion tree to be present in every level. This allows us to regard
the pieces in any level as being pairwise vertex-disjoint by counting a
vertex according to its multiplicity, once for each piece to which it
belongs. Furthermore, we may assume that no new vertices
(i.e.~super sinks) are introduced in the recursive steps since all
$N$ vertices (including all added super sinks) are present in any
recursion level. Finally, we may assume that the size of a piece
in a division is exactly $c_p'n$ since larger pieces will only
increase running time.

From these assumptions, it follows that we may restrict our attention to the
case $p = 1/c_p'$ and the recurrence relation for the running time of our
algorithm becomes \[
T(N) \leq pT(N/p) + c'N^{3/2}\log N, \] where $c'$ is a constant and
the second term is the total time for Phases 2 and 3 and for computing
the division of the graph. We will show that $T(N)\leq cN^{3/2}\log N$
for some constant $c$ (which we may assume is true for small $N$). For the
induction step, we have \[
  T(N) \leq pc(N/p)^{3/2}\log N + c'N^{3/2}\log N
           = (c/\sqrt p + c')N^{3/2}\log N.
\]
The desired time bound is achieved by setting $c = c'/(1 - 1/\sqrt
p) > 0$.

\subsection{Correctness}
\label{sec:correctness-1}

Correctness of the overall algorithm follows from
Theorem~\ref{thm:multi-FF}: that is, the algorithm is correct over all
pieces because we can consider the sources in any order, and we always
fully saturate the flow from the source we consider.

\begin{theorem}\label{thm:multi-FF}
  Given an instance of a multiple source, multiple sink maximum flow
  problem, one can find the maximum flow by:
  \begin{tabbing}
    iterating over the sources $s$ in any order\\
    \qquad iterating over the sinks $t$ in any order \\
    \qquad \qquad saturating a maximum $st$-flow in the residual graph
  \end{tabbing}
\end{theorem}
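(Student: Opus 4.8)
The plan is to reduce the multiple-source, multiple-sink problem to the single-source, single-sink case and then argue by induction on the number of source-sink saturation steps, using the standard fact that augmenting along a maximum $st$-flow in the residual graph never decreases the value of any flow and that a flow is maximum iff its residual graph has no augmenting path from any source to any sink. First I would set up notation: let $G$ be the input graph, let $f$ be the flow accumulated so far, and let $G_f$ denote the residual graph. The key structural observation is that saturating a maximum $st$-flow in $G_f$ means adding to $f$ a flow $g$ supported on $G_f$ such that $G_{f+g}$ contains no residual $s$-to-$t$ path; equivalently, there is an $st$-cut all of whose arcs (directed from the $s$-side to the $t$-side) are saturated in $f+g$.

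The main lemma I would prove is a monotonicity/persistence claim: once we have saturated a maximum $st$-flow, no subsequent saturation step (for any other pair $s',t'$) can reintroduce a residual $s$-to-$t$ path. This is exactly the content of Lemma~\ref{lem:cuts-stay-saturated} applied with $A$ the $s$-side and $B$ the $t$-side of the saturated $st$-cut, after observing that each single-source, single-sink saturation is itself just an augmentation of a preflow (indeed a flow). So after the loop body runs for a pair $(s,t)$, the cut witnessing "no residual $s$-to-$t$ path" stays saturated for the remainder of the execution. Iterating this over all pairs in the order chosen by the algorithm, at termination there is simultaneously, for every source $s$ and every sink $t$, a saturated $st$-cut in the final residual graph; hence there is no residual path from any source to any sink. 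By the Max-Flow/Min-Cut characterization (the same one underlying Lemma~\ref{lem:preflow-equiv}, specialized to flows rather than preflows), the final flow is maximum. Since the algorithm only ever augments along residual paths, the result is a feasible flow, so it is a maximum flow.

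The one subtlety — and the step I expect to be the main obstacle — is making precise that "saturating a maximum $st$-flow in the residual graph" produces a cut whose saturation is preserved by later augmentations in the sense required by Lemma~\ref{lem:cuts-stay-saturated}. The issue is that later augmentations are with respect to a changing residual graph, so I must check that an arc saturated from the $A$-side to the $B$-side of the $st$-cut cannot be "un-saturated" by flow that, in the original graph, travels from $B$ to $A$ across that arc; this is handled precisely by the argument in Lemma~\ref{lem:cuts-stay-saturated}, which only uses that the augmenting object is a valid flow/preflow and that $t \in B$. I would also note that the order-independence is immediate from this framework: the final flow is maximum regardless of the order, because the "no residual source-to-sink path" condition at termination does not depend on the order in which the cuts were established. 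A brief remark that this argument is essentially the correctness proof of the classical Ford–Fulkerson method, lifted to multiple sources and sinks, would round out the proof.
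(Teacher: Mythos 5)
There is a genuine gap, and it sits exactly where you flagged ``the one subtlety.'' Your persistence claim applies Lemma~\ref{lem:cuts-stay-saturated} with $A$ the source side and $B$ the sink side of the min cut certifying the max $st$-flow just computed. But the hypothesis of that lemma is that \emph{all} sinks relevant to subsequent augmentations lie in $B$, and for a single pair's min cut this simply need not hold: another sink $t'$ may lie in $A$. A later augmenting path toward such a $t'$ must enter $A$ from $B$, and it can do so either by pushing flow on a $B$-to-$A$ arc or by cancelling flow on a saturated $A$-to-$B$ arc; either way the cut $(A,B)$ becomes residual, and your invariant fails. The decisive evidence that this cannot be patched by a more careful reading of the lemma is that your argument never uses the nested loop structure of the algorithm: if ``each pair's saturated cut persists'' were true, then saturating source--sink pairs in an \emph{arbitrary} order would yield a maximum flow, which is exactly what Figure~\ref{fig:any-pair-counterexample} refutes. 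An argument that proves the false arbitrary-order statement cannot be a correct proof of the theorem.

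The paper's proof is built precisely to exploit the loop nesting that your argument ignores. For the inner loop (fixed source $s_\ell$), it observes that if a nonzero $s_\ell$-to-$t_j$ flow exists in the residual graph, then $t_j$ must lie on the \emph{source} side $S_i$ of every earlier cut for $s_\ell$, so the new flow is confined to $S_i$ and never disturbs those cuts. For the outer loop, it invokes submodularity of cuts to replace the individual pair cuts for an already-finished source $s_i$ by a non-crossing (laminar) family, and then argues that whichever cut is innermost still separates $s_i$ from every sink after the augmentation for $s_j$. To repair your proof you would need to supply both of these ingredients; Lemma~\ref{lem:cuts-stay-saturated} alone is only strong enough for the single-sink setting of Section~\ref{sec:correctness}, where all sinks really are on one side of the cut.
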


\begin{proof}
  We prove by induction over the inner and outer loops of the
  algorithm.  Let $s_1,s_2,\ldots$ be the arbitrary order that the
  sources are iterated over.

  {\em Outer loops:} Say we are currently considering source $s_j$.
  Assume that there are no residual paths from source $s_i$ for any $i
  < j$ to any sink.  We will show that no residual $s_j$-to-$t$ path
  can, by way of augmentation, introduce an $s_i$-to-$t'$ residual
  path.  By the inductive hypothesis, the Max Flow, Min Cut Theorem,
  and submodularity of cuts, there are non-residual non-crossing cuts
  $S_i, \bar S_i$ and $S_i', \bar S_i'$ with $s_i \in S_i \cap S_i'$
  and $t \in \bar S_i \cap \bar S_i'$.  Let $R$ be a residual
  $s_j$-to-$t$ path.  If $R$ could introduce a residual $s_i$-to-$t'$
  path, then it must be that $s_j \in \bar S_i'$ and $t\in S_i$.  By
  the submodularity of cuts, it must be that $S_i \subset S_i'$.  It
  follows that $S_i,\bar S_i$ is a non-residual cut that also
  separates $s_i$ and $t'$ and augmenting $R$ cannot introduce a
  residual $s_i$-to-$t'$ path.

  {\em Inner loops:} Suppose we are currently considering source
  $s_\ell$.  Let $t_1,t_2, \ldots$ be the arbitrary order that the
  sinks are iterated over for source $s_\ell$.  Suppose we are
  saturating the max $s_\ell t_j$ flow.  By the above argument, doing
  so will not introduce a residual source-to-sink path for any source
  $s_k$, $k < \ell$.  Assume that there are no residual
  $s_\ell$-to-$t_i$ paths for any $i < j$.  By this inductive
  hypothesis and the Max Flow, Min Cut Theorem, there are non-residual
  cuts $S_i, \bar S_i$ such that $s_\ell \in S_i$ and $t_i \in \bar
  S_i$.  If there is a non-zero $s_\ell$-to-$t_j$ flow in the residual
  graph, then it must be that $t_j \in S_i$ for every $i < j$.  It
  follows that no $s_\ell$-to-$t_j$ flow can leave $S_i$.    
\end{proof}

Note that saturating an arbitrary order of the source, sink pairs will
in general not result in a maximum flow, see
Figure~\ref{fig:any-pair-counterexample}.

\begin{figure}[ht]
  \centering
  \includegraphics{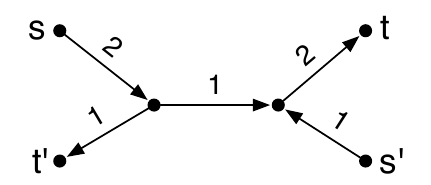}
  \caption{A counterexample to arbitrarily saturating source sink
    pairs: the maximum $\{s,s'\}$-to-$\{t,t'\}$ flow has value 3.  If
    the source, sink pairs are maximized in the order
    $(s,t),(s',t'),(s,t'),(s',t)$, the value of the flow found is only
    2.}
  \label{fig:any-pair-counterexample}
\end{figure}

\section{Bounded Genus Graphs}
\label{sec:bounded-genus}

Since our algorithm does not rely too much on planarity, it is easy to
generalize it to bounded genus graphs. For such graphs, we can use the
$O(n\log^2 n\log^2 C)$ time max $st$-flow algorithm due to Erickson,
Chambers and Nayyeri~\cite{ECN09} ($C$ is the sum of capacities).  To
begin the decomposition, we may first planarize the graph with a set
of $g$ simple cycles, each of length $O(\sqrt{n})$ (e.g.~construction
by Hutchinson and Miller~\cite{HM86}).  This will allow us to continue
with a decomposition based on planar separators and introduce only
$O(g)$ holes in which we can embed a super sink while maintaining
planarity as required for Phase 1.  This gives an $O(n^{3/2}\log^2 n
\log^2 C)$ time algorithm; of course, using the binary blocking flow
algorithm would be asymptotically faster.

\bibliographystyle{plain}
\bibliography{long,multi-bib}

\end{document}